\renewcommand\footnotetextcopyrightpermission[1]{} 
\newcolumntype{P}[1]{>{\RaggedRight\hspace{0pt}}p{#1}}
\newcommand{\eg}{{\it e.g.,}\xspace}
\newcommand{\viz}{{\it viz.,}\xspace}
\newcommand{\etal}{{\it et~al.}\xspace}
\newcommand{\ie}{{\it i.e.,}\xspace}
\newcommand{\etc}{{\it etc.}}
\newcommand{\ci}{{\it (i) }}
\newcommand{\cii}{{\it (ii) }}
\newcommand{\ca}{{\it (a) }}
\newcommand{\cb}{{\it (b) }}
\newcommand{\note}{{\bf Note: }}
\theoremstyle{acmdefinition}
\newtheorem{definition}{Definition}
 \newcommand{\pnamemcex}{HYDRA-C\xspace}
\newcommand{\prefname}{HYDRA\xspace}
\newcommand{\prefnameglobal}{GLOBAL-TMax\xspace}
\newcommand{\prefnamepartition}{HYDRA-TMax\xspace}
\title{Period Adaptation for Continuous Security Monitoring in Multicore Real-Time Systems}
\renewcommand\footnotetextcopyrightpermission[1]{} 
\author{Monowar Hasan\textsuperscript{*}, Sibin Mohan\textsuperscript{*}, Rodolfo Pellizzoni\textsuperscript{\textdagger}, Rakesh B. Bobba\textsuperscript{\textdaggerdbl}}
\affiliation{\textsuperscript{*}\{mhasan11, sibin\}@illinois.edu, \textsuperscript{\textdagger}rodolfo.pellizzoni@uwaterloo.ca, \textsuperscript{\textdaggerdbl}rakesh.bobba@oregonstate.edu\\} 
\email{}
\begin{document}



\begin{abstract}
We propose a design-time
framework (named \pnamemcex) for integrating security tasks into partitioned\footnote{In \textit{partitioned scheduling} (a widely accepted multicore scheduling scheme), tasks are statically partitioned onto identical cores (\ie runtime migration across cores is not permitted)~\cite{parti_see}.} real-time systems (RTS) running on multicore platforms. Our goal is to \textit{opportunistically} execute security monitoring mechanisms in a `continuous' manner -- \ie as often as possible, across cores, to ensure that security tasks run with as few interruptions as possible. Our framework  will allow designers to integrate security mechanisms without perturbing existing real-time (RT) task properties or execution order. We demonstrate the framework using a proof-of-concept implementation with intrusion detection mechanisms as security tasks. We develop and use both, \ca a custom intrusion detection system (IDS), as well as \cb Tripwire -- an open source data integrity checking tool. These are implemented on a realistic rover platform designed using an ARM multicore chip. We compare the performance of \pnamemcex with a state-of-the-art RT security integration approach for multicore-based RTS and find that our method can, on average, detect intrusions 19.05\% faster without impacting the performance of RT tasks.
\end{abstract}

\maketitle




\section{Introduction}
\label{sec:intro}



Limited resources in terms of processing power, memory, energy, \etc~coupled with the fact that security was not considered a design priority has led to the deployment of a large number of real-time systems (RTS) that include little to no security mechanisms. Hence, retrofitting such legacy RTS with general-purpose security solutions is a challenging problem since any perturbation of the real-time (RT) constraints
(runtimes, periods, task execution orders, deadlines, \etc) could be detrimental
to the correct and safe operation of RTS. 
Besides, security mechanisms need to be designed in such a way that an adversary can not easily evade them. Successful attacks/intrusions into
RTS are often aimed at impacting the safety guarantees of such systems, as evidenced by 
recent intrusions (\eg attacks on control systems~\cite{stuxnet, Ukraine16}, automobiles~\cite{ris_rts_1, checkoway2011comprehensive}, medical devices~\cite{security_medical}, \etc~to name but a few).
Systems with RT properties pose unique security challenges -- these systems are required to meet stringent timing requirements along with strong safety requirements. 
Limited resources (\ie
computational power, storage, energy, \etc) prevent security mechanisms that have been primarily developed for general purpose
systems from being effective for safety-critical RTS. 

In this paper we aim to improve the security posture of 
RTS through integration of security tasks while ensuring that the existing RT tasks are not affected by such integration.  The security tasks considered could be carrying out any one of protection, detection or response-based operations, depending on the system requirements. For instance, a sensor measurement correlation task may be added for detecting sensor manipulation or a change detection task (or other intrusion detection programs) may be added to detect changes/intrusions into the system. In Table \ref{tab:mc_se_ex_table} we present some examples of security tasks that can be integrated into legacy systems (this is by no stretch meant to be an exhaustive list). Note that the addition of any security mechanisms (such as IDS, encryption/authentication, behavior-based monitoring, \etc) may require modification of the system or the RT task parameters as was the case in prior work~\cite{xie2007improving,lin2009static,lesi2017network,lesi2017security, securecore,securecore_memory,securecore_syscal,xie2005dynamic,sibin_RT_security_journal}. 

Further, to provide the best protection, \textit{security tasks may need to be executed as often as possible}. 
If the interval between consecutive checking events is too large then an attacker may remain undetected and cause harm to the system between two invocations of the security task. In contrast, if the security tasks are executed very frequently, it may impact the schedulability of the RT (and other security) tasks. The challenge is then to determining the right \textit{periods} (\ie minimum inter-invocation time) for the security tasks~\cite{sibin_deeply}.

\begin{table}
\caption{Example of Security Tasks}
\label{tab:mc_se_ex_table}
\centering
\hspace*{-0.5em}
\begin{tabular}{P{3.65cm}||P{4.2cm}}
\hline 
\bfseries Security Task & \bfseries Approach/Tools\\
\hline\hline
File-system checking & Tripwire~\cite{tripwire}, AIDE~\cite{aide}, \etc \\
Network packet monitoring & Bro~\cite{bro}, Snort~\cite{snort}, \etc \\
Hardware event monitoring & Statistical analysis based checks~\cite{woo2018early} using performance monitors (\eg \text{perf}~\cite{linux_perf}, \texttt{OProfile}~\cite{oprofile}, \etc)\\ 
Application specific checking & Behavior-based detection (see the related work~\cite{anomaly_detection_survey, securecore,securecore_memory,securecore_syscal}) \\
\hline
\end{tabular}
\end{table}

As a step towards enabling the design of secure RT platforms, opportunistic execution~\cite{mhasan_rtss16,mhasan_date18} has been proposed as a potential way to integrate security mechanisms into legacy RTS --  this allows the execution of security mechanisms as background services without impacting the timing constraints of the RT tasks. Other approaches have been built on this technique for integrating tasks into both legacy and non-legacy systems~\cite{mhasan_ecrts17,xie2007improving,lin2009static,lesi2017network,lesi2017security,securecore,hamad2018prediction}. However, most of that work was focused on single core RTS (that are a majority of such systems in use today). However, \textit{multicore} processors have found increased use in the design of RTS to improve overall performance and energy efficiency~\cite{rt_multicore_lui,mutiprocessor_survey}. 
While the use of such processors \textit{increases} the security problems in RTS (\eg due to parallel execution of critical tasks)~\cite{mhasan_rtiot_sensors19} to our knowledge very few security solutions have been proposed in literature~\cite{mhasan_date18}. In prior work (called \prefname)~\cite{mhasan_date18} researchers have developed a mechanism for integrating security into multicore RTS. However this work uses a partitioned scheduling approach and does not allow runtime migration of security tasks across cores. We show that this results in delayed detection of intrusions\footnote{We discuss this issue further in  Section~\ref{sec:evaluation}.} as the security tasks are not able to execute as frequently.
Our main goal in this paper is \textit{to raise the responsiveness of such security tasks by increasing their frequency of execution}. For instance, consider an intrusion detection system (IDS) -- say one that checks the integrity of file systems. If such a system is interrupted (before it can complete checking the entire system), then an adversary could use that opportunity to intrude into the system and, perhaps, stay resident in the part of the filesystem that has already been checked (assuming that the IDS is carrying out the check in parts). If, on the other hand, the IDS task is able to execute with as few interruptions as possible (\textit{e.g.}, by moving immediately to an empty core when it is interrupted), then there is much higher chance of success and, correspondingly, a much lower chance of a successful adversarial action.

\paragraph*{Our Contributions}

 In this paper, we propose a design-time methodology and a framework named \pnamemcex for partitioned\footnote{Since this is the commonly used multicore scheduling approach for many commercial and open-source OSs (such as OKL4~\cite{okl4}, QNX~\cite{qnx}, RT-Linux~\cite{rt_patch}, \etc) -- mainly due to its simplicity and efficiency~\cite{parti_see,mhasan_date18}.} RTS that \ca leverages semi-partitioned scheduling~\cite{kato2009semi} to enable \emph{continuous execution} of security tasks (\ie execute as frequently as possible) across cores, and \cb does not impact the timing constraints of other, existing, RT tasks. 

\pnamemcex takes advantage of the properties of a multicore platform and allows security tasks to migrate across available cores and execute \textit{opportunistically} (\ie when the RT tasks are not running). This framework extends existing work~\cite{mhasan_date18} and ensures better security (\eg faster detection time) and schedulability (see Section \ref{sec:evaluation}). \pnamemcex is able to do this without violating timing constraints for either the existing RT tasks or the security ones (Section~\ref{sec:se_int_frmakework}). We develop a mathematical model and iterative solution that allows security tasks to execute as frequently as possible while still considering the schedulability constraints of other tasks (Section~\ref{sec:period_selection}). In addition, we also present an implementation on a realistic ARM-based multicore rover platform (running a RT variant of Linux system and realistic security applications). We then perform comparisons with the state-of-the-art~\cite{mhasan_date18} 
(Section~\ref{sec:exp_rover}).
Finally, we carry out a design space exploration using synthetic workloads and study trade-offs for schedulability and security. Our evaluation shows that proposed semi-partitioned approach can achieve better execution frequency for security tasks and consequently quicker intrusion detection ($19.05\%$ faster on average) when compared with both fully-partitioned and global scheduling approaches while providing the same or better schedulability (Section~\ref{sec:exp_synthetic}).

	 \note
	 We do not target our framework towards any specific security mechanism -- our focus is to integrate any designer-provided security solution into a multicore-based RTS.
 In our experiments we used Tripwire~\cite{tripwire} (a data integrity checking tool) as well as our \textit{in-house custom-developed malicious kernel module checker} to demonstrate the feasibility of our approach -- the integration framework proposed in this paper is more broadly applicable to other security mechanisms.

\section{Model and Assumptions}

\subsection{Real-time Tasks and Scheduling Model} \label{sec:rt_model}

Consider a set of $N_R$ RT tasks $\Gamma_R = \lbrace \tau_1, \tau_2, \cdots, \tau_{N_R} \rbrace$, scheduled on a multicore platform with $M$ identical cores $\mathcal{M} = \lbrace \pi_1, \pi_2, \cdots, \tau_{M} \rbrace$. Each RT task $\tau_r$ releases an infinite sequence of task instances, also called \textit{jobs}, and is represented by the tuple $(C_r, T_r, D_r)$ where $C_r$ is the worst-case execution time (WCET), $T_r$ is the minimum inter-arrival time (\eg period) and $D_r$ is the relative deadline. 
The utilization of each task is denoted by $U_r = \frac{C_r}{T_r}$.   
We assume constrained deadlines for RT tasks (\eg $D_r \leq T_r$) and that the task priorities are assigned according to rate-monotonic (RM)~\cite{Liu_n_Layland1973} order (\eg shorter period implies higher priority).

All events in the system happen with the precision of integer clock ticks (\ie processor clock cycles), that is, any time $t$ involved in scheduling is a non-negative integer. In this paper we consider RT tasks that are scheduled using partitioned fixed-priority preemptive scheme~\cite{mutiprocessor_survey} and assigned to the cores using a standard task partitioning algorithm~\cite{parti_see, mutiprocessor_survey}. We further assume that the RT tasks are \textit{schedulable}, \viz the worst-case response time (WCRT), denoted as $\mathcal{R}_r$, is less than deadline (\eg  $\mathcal{R}_r \leq D_r, \forall \tau_r$) and the following necessary and sufficient schedulability condition holds for each RT tasks $\tau_r$ assigned to any given core $\pi_m$~\cite{parti_see}: 
\begin{equation}
\exists t : 0 < t \leq D_r \text{~and~} C_r + \hspace*{-1.5em} \sum\limits_{\tau_i \in hp(\tau_r, \pi_m)} \hspace*{-0.2em}  \left\lceil \frac{t}{T_i} \right\rceil C_i   \leq t,
\end{equation}
 where $hp(\tau_r, \pi_m)$ denotes the set of RT tasks with higher priority than $\tau_r$ assigned to core $\pi_m$.

\subsection{Security Model}

Our focus is on integrating given security mechanisms abstracted as security tasks into a legacy multicore RTS without impacting the RT functionality of the RTS. While we use specific intrusion detection mechanisms (\eg Tripwire) to demonstrate our approach, our approach is somewhat agnostic to the security mechanisms. The security model used and the design of security tasks are orthogonal problems. Since we aim to maximize the frequency of execution of security tasks, security mechanisms whose performance improves with frequency of execution (\eg intrusion monitoring and detection tasks) benefit from our framework.

\section{Security Integration Framework} \label{sec:se_int_frmakework}

We propose to improve the security posture of multicore based RT systems by integrating additional \textit{periodic security tasks} (\eg tasks that are specifically designed for intrusion detection purposes). We highlight that \pnamemcex abstracts security tasks and allows designers to execute \textit{any given techniques}. Our focus here is on integration of a given set of security tasks (\eg intrusion detection mechanisms) in an existing multicore RTS without impacting the RT task parameters (\eg WCET, periods, \etc) or their task execution order. In general, the addition of security mechanisms may increase the execution time of existing tasks~\cite{lin2009static,xie2007improving}
or reduce schedulability~\cite{sibin_RT_security_journal}. 
As we mentioned earlier, our focus is on legacy multicore systems where designers
may not have enough flexibility to modify system parameters
to integrate security mechanisms.
We address this problem by allowing security tasks to execute with a priority lower than all the RT tasks, \ie leverage \emph{opportunistic execution}~\cite{mhasan_rtss16, mhasan_date18}. This way, security tasks will only execute during the slack time (\eg when a core is idle) and the timing requirements of the RT tasks will not be perturbed. However, in contrast to prior work (\prefname)~\cite{mhasan_date18} where the security tasks are statically bound to their respective cores, in this paper we allow security tasks to continuously migrate at runtime (\ie the combined taskset with RT and security tasks follows a semi-partitioned scheduling policy) whenever any core is available (\eg when other  RT or higher-priority security tasks are not running). An illustration of \pnamemcex is presented in Fig.~\ref{fig:se_int_fig} where two RT tasks (represented by blue and green rectangles) are partitioned into two cores and a newly added security task (red rectangle) can move across cores.

\begin{figure}
	\centering
	\includegraphics[scale=0.20]{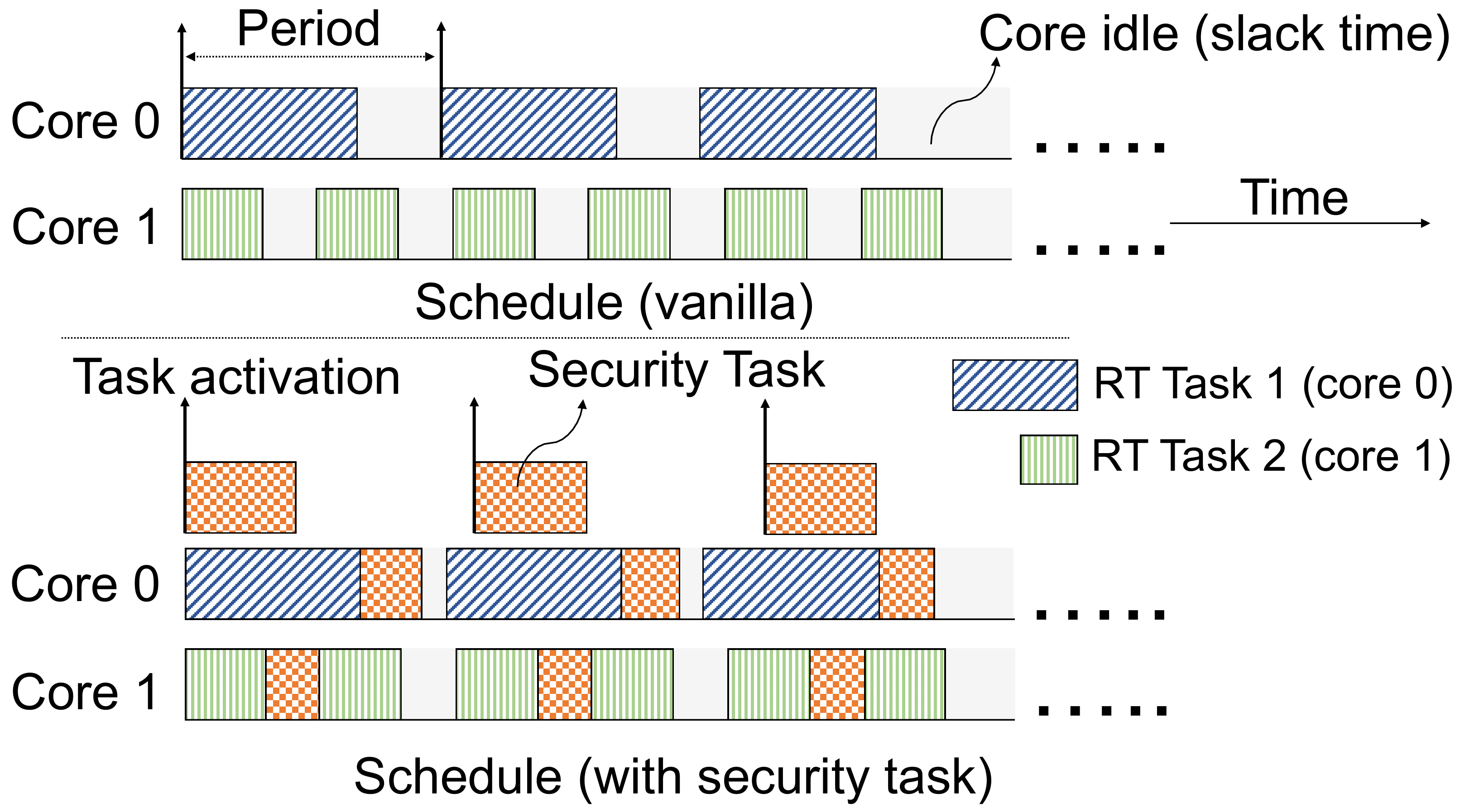}
	\caption{Illustration of our security integration framework for a dual-core platform: two RT tasks (blue and green) are statically assigned to two cores (core 0 and core 1, respectively). We propose to integrate a security task (red) that will execute with lowest priority and can be migrated to ether core (whichever is idle) at runtime.}
	\label{fig:se_int_fig}
\end{figure}

As we shall see in Section \ref{sec:evaluation}, allowing security tasks to execute on any available core will give us the  opportunity to execute security tasks more frequently (\eg with shorter period) and that leads to better responsiveness (faster intrusion detection time). One fundamental question with our security integration approach is to figure out how often to execute security tasks so that the system remains schedulable (\eg WCRT is less than period), and also can execute within a designer provided frequency bound (so that the security checking remains effective). This is different when compared to scheduling traditional RT tasks since the RT task parameters (\eg periods) are often derived from physical system properties and cannot be adjusted due to control/application requirements. We now formally define security tasks.

\subsubsection*{Security Tasks}
Let us include a set of $N_S$ security tasks $\Gamma_S = \lbrace\tau_1, \tau_2, \cdots, \tau_{N_S} \rbrace$ in the system. We adopt the periodic security task model \cite{mhasan_rtss16} and represent each security task by the tuple $(C_s, T_s, T_s^{max})$ where $C_s$ is the WCET, $T_s$ is the (unknown) period (\eg $\tfrac{1}{T_s}$ is the monitoring frequency) and $T_s^{max}$ is a designer provided upper bound of the period -- if the period of the security task is higher than $T_s^{max}$ then the responsiveness is too low and  security checking may not be effective.

We assume that priority of the security tasks are distinct and specified by the designers (\eg derived from specific security requirements). Security tasks have implicit deadlines, \ie they need to finish execution before the next invocation. We also assume that task migration and context switch overhead is negligible compared to WCET of the task. Our goal here is to find a minimum period $T_s \leq T_s^{max}$ (so that the security tasks can execute more frequently)  such that the taskset remains schedulable (\eg $\forall \tau_s \in \Gamma_S$: $\mathcal{R}_s \leq T_s$ where $\mathcal{R}_s$ is the WCRT\footnote{The calculation of WCRT is presented in Section \ref{ref:se_wcrt_cal}.} of $\tau_s$).

\section{Period Selection} \label{sec:period_selection}

The actual periods for the security tasks are not known -- we need to find the periods that ensures schedulability and gives us better monitoring frequency. Mathematically this can be expressed as the following optimization problem: $\underset{T_s, \forall \tau_s \in \Gamma_S}{\operatorname{minimize}} \sum\limits_{\tau_s \in \Gamma_S} T_s$, subject to $\mathcal{R}_s \leq T_s \leq T_s^{max}, \forall \tau_s \in \Gamma_S$. This is a non-trivial problem since the period of $\tau_s$ can be anything in $[\mathcal{R}_s, T_s^{max}]$ and the response time $\mathcal{R}_s$ is variable as it depends on the period of other higher priority security tasks. We first derive the WCRT of the security tasks and use it as a (lower) bound to find the periods. Our WCRT calculation for security tasks is based on the existing iterative analysis for global multicore scheduling~\cite{guan2009new_wcrt_bound,sun2014improving_wcrt2,global_rta_sanjay} and we modify it to account the fact that RT tasks are partitioned.


\subsection{Preliminaries} \label{sec:background}




We start by briefly reviewing the relevant terminology and parameters. We are interested in determining the response time of a job $\tau_s^k$ of task $\tau_s$ (\eg job under analysis) using an iterative method and the response time in each iteration is denoted by $x$. 



\begin{definition}[Busy Period]
The \textit{busy period} of $\tau_s^k$ is the maximal continuous
time interval $[t_1, t_2)$ (until $\tau_s^k$ finishes) where all the cores are executing either higher priority tasks or $\tau_s^k$ itself. 
\end{definition}


\begin{definition}[Interference]
	Given task $\tau_i$, the interference $I_{\tau_s \leftarrow \tau_i}$ caused by $\tau_i$ on $\tau_s^k$ is the number of time units in the busy period when $\tau_i$ executes while $\tau_s^k$ does not.
\end{definition}

Note that the job under analysis $\tau_s^k$ cannot execute if all cores are busy with higher priority tasks; hence, the length of the busy period is at most $\left\lfloor \tfrac{\Omega_s}{M} \right\rfloor +C_s$ by definition, where $\Omega_s$ is the sum of the interference caused by all higher priority tasks on $\tau_s^k$. To compute the value of $I_{\tau_s \leftarrow \tau_i}$, we rely on the concept of \textit{workload}.


\begin{definition}[Workload]
The \textit{workload} $W_i(x)$ of a task $\tau_i$ in a window of length $x$ represents the accumulated execution time of $\tau_i$ within this time interval.
\end{definition}



It remains to compute the workload and corresponding interference for each higher priority task $\tau_i$. We first show how to do so for RT tasks and then for security tasks with higher priority than $\tau_s$.

\subsection{Interference Calculation for RT Tasks} \label{sec:intf_cal}



Since RT tasks are statically partitioned to cores and they have higher priority than any task that is allowed to migrate between cores, the worst-case workload for RT tasks can be trivially obtained based on the same critical instant used for single core fixed-priority scheduling case~\cite{Liu_n_Layland1973}.

\begin{lemma} \label{lemma:rt_workload}
For a given core $\pi_m$, the maximum workload of RT tasks executed on $\pi_m$ in any possible time interval of length $x$ is obtained when all RT tasks are released synchronously at the beginning of the interval.
\end{lemma}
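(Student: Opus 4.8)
The plan is to reduce the claim to the classical single-core critical-instant theorem of Liu and Layland~\cite{Liu_n_Layland1973}. The key structural observation is that the RT tasks assigned to $\pi_m$ are never preempted or delayed by any migrating task: they have strictly higher priority than every security task and they never leave $\pi_m$. Hence at every instant $\pi_m$ runs the highest-priority RT task that has a pending job (and a security task only in the residual idle time), exactly as in a stand-alone single-core fixed-priority preemptive system built from just the RT tasks on $\pi_m$. Consequently the quantity of interest, namely $\sum_{\tau_i \in \Gamma_R \cap \pi_m} W_i(x)$ (the total RT workload on $\pi_m$), depends only on this self-contained sub-system, and it suffices to show that within this sub-system the accumulated RT execution over any interval of length $x$ is maximized by a synchronous release at the left endpoint of the interval.

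I would prove this by a release-shifting (exchange) argument applied to the RT tasks of $\pi_m$ in decreasing-priority order. Fix a window $[t_0, t_0+x)$ and an arbitrary legal release pattern. For the current task $\tau_i$, with all higher-priority tasks already pinned to their synchronous pattern at $t_0$, I translate $\tau_i$'s entire release sequence so that one of its jobs is released exactly at $t_0$ and its successors are released periodically thereafter with period $T_i$, and I then argue that this translation does not decrease the amount of RT execution falling inside $[t_0, t_0+x)$. The argument has two ingredients: (i) making higher-priority jobs arrive no later than before never postpones any RT job beyond its original finishing time, so work is only moved forward in time; and (ii) every RT job released before $t_0$ finishes by time $t_0 + T_i$ (using $D_i \le T_i$ and schedulability), so re-releasing it at $t_0$ can only enlarge its overlap with the window, while the extra jobs thereby pulled in from the right replace at least as much execution as is pushed out at the left. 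Iterating over all priority levels yields exactly the synchronous-release scenario, proving the lemma; the familiar bound $W_i(x) \le \lceil x/T_i \rceil C_i$ then follows by counting the at most $\lceil x/T_i \rceil$ releases of $\tau_i$ inside a window that begins with one of its releases.

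The main obstacle I expect is handling carry-in jobs cleanly, i.e. jobs of $\tau_i$ released strictly before $t_0$ whose execution straddles $t_0$: one must verify that re-aligning $\tau_i$ so that such a job is instead released at $t_0$ never loses the workload it contributed inside the window. Concretely this amounts to showing that a periodic train of length-$C_i$ execution blocks spaced $T_i$ apart has its total overlap with a fixed window of length $x$ maximized when a block starts at the window's left endpoint, which is a short but somewhat fiddly case analysis on how the block boundaries straddle $t_0$ and $t_0 + x$. Once this is settled for the highest-priority task, the remaining levels follow from the priority induction, because each lower-priority task then sees the (now worst-case, synchronous) higher-priority workload and is itself shifted in the same way.
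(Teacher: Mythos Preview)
Your opening observation---that the RT tasks on $\pi_m$ form a self-contained single-core fixed-priority subsystem, unaffected by security tasks---is exactly the starting point of the paper's proof as well. From there, however, the paper takes a shorter route than your release-shifting induction.

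Rather than fixing the window $[t_0, t_0+x)$ and shifting each task's release pattern toward $t_0$, the paper fixes the release pattern and shifts the \emph{window}. Given an arbitrary interval $[t,t+x)$, it slides the left endpoint back to the start $t'$ of the RT busy period containing $t$ (so $\pi_m$ executes RT work continuously on $[t',t)$ and is idle at $t'-1$). Because $[t',t)$ is entirely busy, the new window $[t',t'+x)$ carries at least as much RT workload as $[t,t+x)$; and because the core is idle at $t'-1$, no job released before $t'$ survives into the window, so the carry-in case you flag as the main obstacle simply vanishes. What remains---that among patterns with no release before $t'$ the synchronous one maximizes workload---is only the easy direction of your exchange argument (every shift moves releases strictly earlier, hence cumulative released work dominates pointwise and idle time can only shrink).

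Your approach is workable, but the inductive step is more delicate than the sketch lets on: when you re-anchor $\tau_i$ at $t_0$, you must show that the \emph{total} RT execution in the window does not drop, not merely $\tau_i$'s own contribution, and in the carry-in case the subsequent releases of $\tau_i$ actually move \emph{later}, so the monotonicity in your ingredient~(i) does not apply directly. The paper's window-shifting trick buys precisely the elimination of this case in two lines; it is the cleaner alternative and worth knowing.
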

\begin{proof}
Since RT tasks are partitioned and they have higher priorities than security tasks, the schedule of RT tasks executed on $\pi_m$ does not depend on any other task in the system. Now consider any interval $[t, t+x)$ of length $x$. We show that we can obtain an interval $[t', t'+x)$ where all tasks are released at $t'$, such that the workload of RT tasks on $\pi_m$ is higher in $[t', t'+x)$ compared to $[t, t+x)$. 

\textit{First step}: let $t'$ be the earliest time such that $\pi_m$ continuously executes RT tasks in $[t', t)$; if such time does not exist, then let $t' = t$. By definition, $\pi_m$ does not execute RT tasks at time $t'-1$. Also since RT tasks continuously execute in $[t', t)$, the workload of RT tasks in $[t', t'+x)$ cannot be smaller than the workload in $[t, t+x)$.

\textit{Second step}: since $\pi_m$ is idle at $t' - 1$, no job of RT tasks on $\pi_m$ released before $t'$ can contribute to the workload in $[t', t)$. Hence, the workload can be maximized by anticipating the release of each RT task $\tau_r$ so that it corresponds with $t'$. This concludes the proof.
\end{proof}

\begin{figure}
	 \includegraphics[scale=0.28]{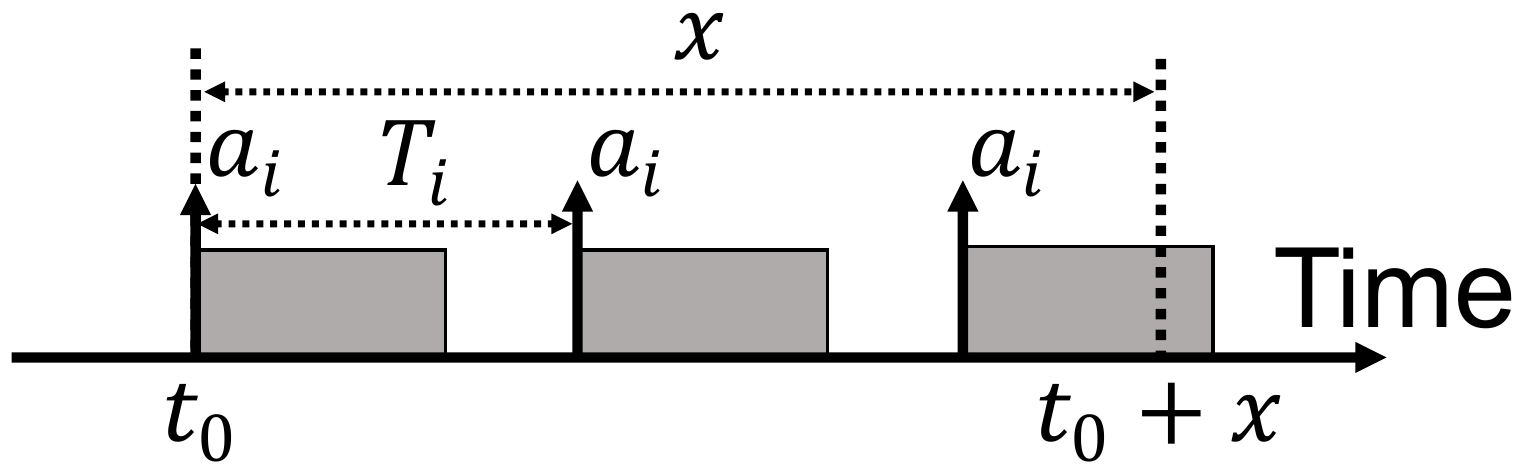}
	\caption{Workload of the RT tasks for a window of size $x$. The arrival time of the task $\tau_{i}$ is denoted by $a_{i}$.}
\label{fig:nc_workload}
\end{figure}

Let $\Gamma_R^{\pi_m} \subseteq \Gamma_R$ denote the set of RT tasks partitioned to core $\pi_m$. Based on Lemma~\ref{lemma:rt_workload}, an 
upper bound to the workload of RT tasks on $\pi_m$ can be obtained by assuming that each RT task $\tau_r$ is released at the beginning of the interval and each job of $\tau_r$ executes as early as possible after being released, as shown in Fig.~\ref{fig:nc_workload}. We thus obtain the workload for RT task $\tau_r$:
\begin{equation} \label{eq:nc_workload}
W_r^{R}(x)=\left\lfloor \frac{x}{T_r} \right\rfloor C_r + \min (x~\mathsf{mod}~T_r, C_r),
\end{equation}
and summing over all RT tasks on $\pi_m$ yields a total workload $\sum\limits_{\tau_i \in \Gamma_R^{\pi_m}} W_i^{R}(x)$. Finally, we notice that by definition the interference caused by a group of tasks executing on the same core $\pi_m$ on $\tau_s$ cannot be greater than $x - C_s + 1$.
Therefore, the maximum interference caused by RT tasks on $\pi_m$ to $\tau_s$ can be bounded as:
\begin{equation} \label{eq:in_rt}
I_{\tau_s \leftarrow  \Gamma_R^{\pi_m}}\Big(x, \hspace*{-0.5em} \sum_{\tau_i \in \Gamma_R^{\pi_m}} \hspace*{-0.5em} W_i^{R}(x)\Big) = \min \left( \sum_{\tau_i \in \Gamma_R^{\pi_m}} \hspace*{-0.2em} W_i^{R}(x), x - C_s + 1 \right).
\end{equation}
The `$+1$' term in the upper bound of the interference (\eg Eq.~(\ref{eq:in_rt})) ensures the convergence of iterative search for the response time (recall from Section~\ref{sec:background} that at each iteration the response time is denoted by $x$) to the correct value~\cite{bertogna2007response}. For example, when the iterative search for the response time is started with $x = C_s$ (\ie $x-C_s = 0$), the search would stop immediately (and outputs an incorrect WCRT) since $\min \left( \sum\limits_{\tau_i \in \Gamma_R^{\pi_m}} \hspace*{-0.2em} W_i^{R}(x), x - C_s \right) = 0$.

\subsection{Interference Calculation for Security Tasks} \label{sec:wcrt_calculation}

We next consider the workload of security tasks with higher priority than $\tau_s$. The workload computation depends on the arrival time of the task relative to the beginning of the busy period, as specified in the following definition.

\begin{definition}[Carry-in]
A task $\tau_i$ is called a \textit{carry-in} task if 
there exists one job of $\tau_i$ that has been released before the beginning of a given time window of length $x$ and executes within the window. 
If no such
job exists, $\tau_i$ is referred to as a \textit{non-carry-in} task. 
\end{definition}

\begin{figure}[!t]
\centering
\includegraphics[width=2.5in]{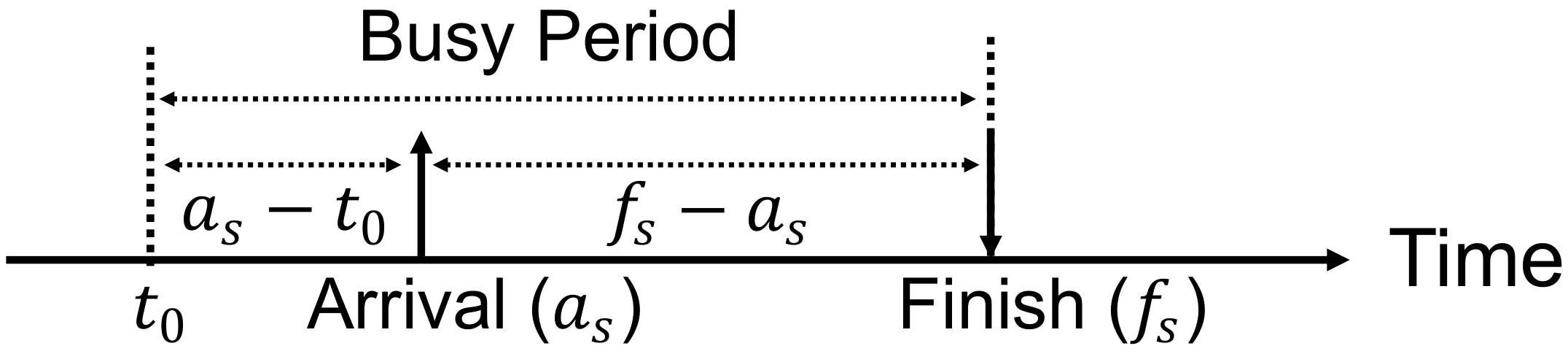}
\caption{Extension of busy period for bounding the number of carry-in higher priority security tasks.}
\label{fig:busy_period_extension}
\end{figure}

Generally (but not always), the workload of a task $\tau_i$ in the busy period is higher if $\tau_i$ is a carry-in task than a non-carry-in task. Hence, it is important to limit the number of higher priority carry-in tasks. To this end, we follow an approach similar to prior research~\cite{guan2009new_wcrt_bound,global_rta_sanjay} and extend the busy period of $\tau_s^k$ from its arrival time (denoted by $a_s$) to an earlier time instance $t_0$ (see Fig.~\ref{fig:busy_period_extension}) such that during any time instance $t \in [t_0, a_s)$ all cores are busy executing tasks with higher priority than $\tau_s$. Note that by definition, this implies that there was at least one free core (\ie not executing higher priority tasks) at time $t_0 - 1$.

\setcounter{theorem}{1}
\begin{lemma}
\label{lemma_ci_se}
At most $M-1$ higher priority tasks can have carry-in at time $t_0$.
\end{lemma}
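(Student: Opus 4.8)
The plan is to adapt the standard ``free core at the start of the extended busy window'' argument used in global fixed-priority response-time analysis~\cite{guan2009new_wcrt_bound,global_rta_sanjay} to our semi-partitioned setting. Note first that the carry-in / non-carry-in split is applied only to the security tasks (the RT tasks are already handled by the per-core synchronous-release workload of Lemma~\ref{lemma:rt_workload}), so in the statement the relevant ``higher priority tasks'' are precisely the security tasks with priority higher than $\tau_s$; I would make this scoping explicit at the start of the proof.

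First I would pin down what each such carry-in task contributes at the left endpoint of the window. If $\tau_i$ is a security task of priority higher than $\tau_s$ with carry-in at $t_0$, then by the definition of carry-in it has a job $J_i$ released strictly before $t_0$ that still executes somewhere inside the window; in particular $J_i$ has not completed by $t_0$, hence $J_i$ is pending (released and unfinished) during the whole interval $[t_0-1, t_0)$.

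Next I would invoke the defining property of $t_0$: there is at least one core $\pi_f$ that at time $t_0-1$ is \emph{not} executing any task of priority higher than $\tau_s$ (otherwise the busy window would have been extended further back). Combining this with the work-conserving nature of the opportunistic policy and the fact that security tasks may migrate to any such core, I claim $J_i$ must actually be \emph{executing} at $t_0-1$: if $J_i$ were merely pending, the scheduler would dispatch a pending security job of priority higher than $\tau_s$ (namely $J_i$ itself, or an even higher-priority one, after preempting whatever lower-priority work runs on $\pi_f$) onto $\pi_f$, contradicting the choice of $\pi_f$. Hence every higher-priority carry-in security task occupies a core at time $t_0-1$; distinct tasks occupy distinct cores, and none of those cores is $\pi_f$. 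Since $\pi_f$ is one of the $M$ cores, at most $M-1$ cores are left to host the carry-in jobs, so at most $M-1$ security tasks can have carry-in at $t_0$.

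The step I expect to be the main obstacle is exactly the ``a free core forces every pending higher-priority job to run'' implication, because this is where partitioning could bite: a pending \emph{RT} job on some core $\pi_m$ cannot be migrated onto $\pi_f$, so the same reasoning would fail for RT tasks -- which is the reason the lemma is, and must be, restricted to the migrating security tasks (and why RT interference is bounded separately in Section~\ref{sec:intf_cal}). I would therefore state this restriction carefully, and also sanity-check the window off-by-one (whether the ``at least one free core'' property is asserted at $t_0$ or at $t_0-1$) so that ``$J_i$ pending during $[t_0-1,t_0)$'' is consistent with ``free core at $t_0-1$''; with the convention in the excerpt the two line up and the count comes out to $M-1$.
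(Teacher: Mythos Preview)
Your argument is correct and is essentially the same ``free core at $t_0-1$'' observation the paper uses: by definition of $t_0$ some core is not executing a higher-priority task at $t_0-1$, so strictly fewer than $M$ higher-priority tasks can be active there, hence at most $M-1$ carry in. The paper's proof is just this single sentence; you have unpacked the work-conserving step that makes ``pending $\Rightarrow$ executing on some core'' go through.

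Your extra care about scoping is in fact sharper than the paper. The lemma is stated for ``higher priority tasks'' in general, and the paper immediately derives the security-task bound as a corollary; but as you note, the step ``$M$ active higher-priority tasks would occupy all $M$ cores'' relies on the tasks being free to migrate. For partitioned RT tasks, several can be pending on one core while only one runs, so the literal all-tasks version of the statement is not supported by the paper's one-line argument. Since RT interference is handled separately via Lemma~\ref{lemma:rt_workload} and only the security-task bound is ever used, your explicit restriction to $hp_S(\tau_s)$ is the clean formulation, and your identification of migration as the ingredient that makes the count work is exactly the point the paper leaves implicit.
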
 
\begin{proof}
The maximum number of higher priority tasks that can have carry-in at $t_0$ is $M-1$ since by definition there have to be strictly less than $M$ higher priority tasks active at time $t_0 -1$ (otherwise they will occupy all the cores).
\end{proof}

\begin{figure}
	\centering
	 \includegraphics[scale=0.28]{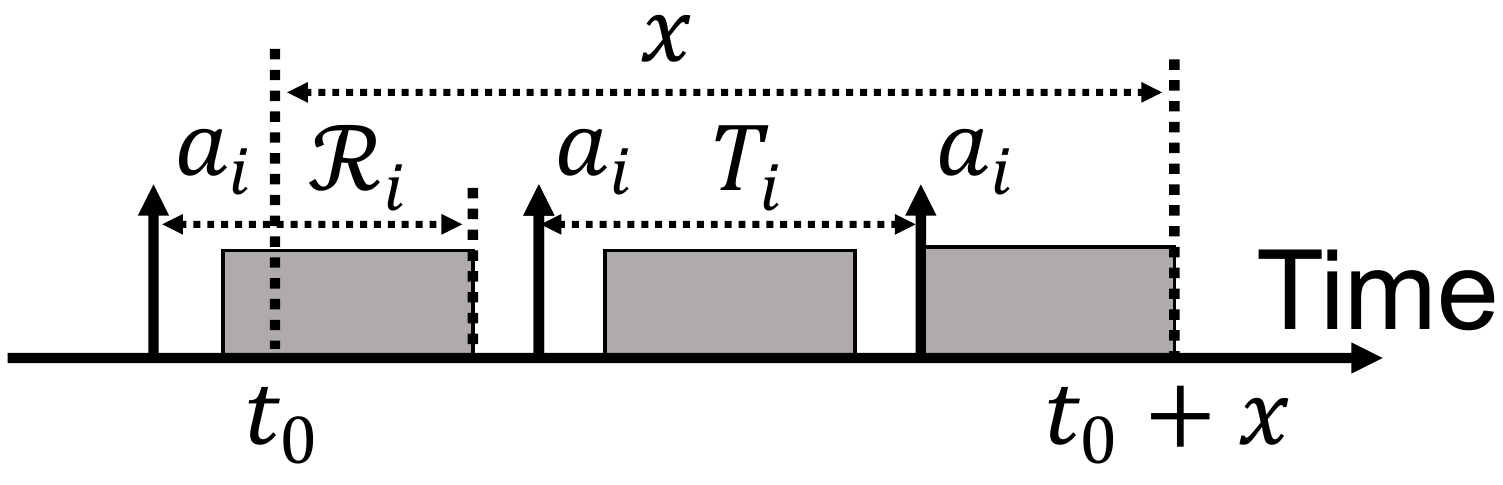}
	\caption{Illustration of carry-in task  for a window of size $x$.}
	\label{fig:ci_workload}
\end{figure}

Since Lemma~\ref{lemma_ci_se} holds for all tasks with higher priority than $\tau_s$, an immediate corollary is that the number of security tasks with carry-in at $t_0$ also cannot be larger than $M-1$. If a security task $\tau_i$ does not have carry-in, its workload is maximized when the task is released at the beginning of the busy interval. Hence, we can calculate the workload bound $W_i^{S_{NC}}(x)$ for the interval $x$ using Eq.~(\ref{eq:nc_workload}), \eg $W_i^{S_{NC}}(x)=\left\lfloor \frac{x}{T_i} \right\rfloor C_i + \min (x~\mathsf{mod}~T_i, C_i)$. Likewise, the workload bound for a carry-in security task $\tau_i$ in an interval of length $x$ starting at $t_0$ is given by (see Fig.~\ref{fig:ci_workload}): 
\begin{equation} \label{eq:ci_workload}
W_i^{S_{CI}}(x) = W_i^{S_{NC}}\left( \max(x - \bar{x}_i, 0) \right) + \min(x, C_i - 1),
\end{equation}
where
$\bar{x}_i = C_i - 1 +  T_i - \mathcal{R}_i$. We can bound the workload of the first carry-in job to $C_i - 1$ because the job must have started executing at the latest at $t_0 - 1$ (given that not all cores are busy).  Finally, using the same argument as in Section~\ref{sec:intf_cal}, the interference of $\tau_i$ can be bounded as follows:
\begin{equation} \label{eq:intf_ci_nc}
I_{\tau_s \leftarrow \tau_i}(x, W_i(x)) = \min \left( W_i(x), x - C_s + 1 \right),
\end{equation}
where $W_i(x)$ is either $W^{S_{NC}}_i(x)$ or $W^{S_{CI}}_i(x)$. Notice that the WCRT and periods of security task in the carry-in workload function 
(see Eq.~(\ref{eq:ci_workload})) 
is actually an unknown parameter. However, we follow an iterative scheme that allows us to calculate the period and WCRT of all higher priority security tasks before we calculate the interference for task $\tau_s$ (refer to Section \ref{sec:alg} for details).





\subsection{Response Time Analysis} \label{ref:se_wcrt_cal}

Let $hp_S(\tau_s$) denote the set of security tasks with a higher priority  than $\tau_s$. Note that we do not know which (at most) $M-1$ security tasks in $hp_S(\tau_s)$ have carry-in. In order to derive the WCRT of $\tau_s$, let us define $\mathcal{Z}_{\tau_s} \subset \Gamma \times \Gamma$ as the set of all partitions of $hp_S(\tau_s)$ into two subsets $\Gamma_s^{NC}$ and $\Gamma_s^{CI}$ (\eg the non overlapping set of carry-in and non-carry-in tasks) such that: 
\begin{equation*}
\Gamma_s^{NC} \cap \Gamma_s^{CI} = \emptyset,
\Gamma_s^{NC} \cup \Gamma_s^{CI} = hp_S(\tau_s),
\text{~and~} 
|\Gamma_s^{CI}| \leq M-1,
\end{equation*}
\eg there are at most $M-1$ carry-in tasks.


For a given carry-in and non-carry-in set (\eg $\Gamma_s^{NC}$ and $\Gamma_s^{CI}$), we can calculate the total interference experienced by $\tau_s$ as follows:
\begin{eqnarray}
\Omega_s(x, \Gamma_s^{NC}, \Gamma_s^{CI})  = \sum_{\pi_m \in \mathcal{M}} I_{\tau_s \leftarrow \Gamma_R^{\pi_m}} \Big(x, \sum_{\tau_i \in \Gamma_R^{\pi_m}} W_i^{R}(x) \Big)~+ \nonumber \\ 
\sum_{\tau_i \in \Gamma_s^{NC}} I_{\tau_s \leftarrow \tau_i}\Big(x, W_i^{S_{NC}}(x) \Big)~ + 
\sum_{\tau_i \in \Gamma_s^{CI}}  I_{\tau_s \leftarrow \tau_i}\Big(x, W_i^{S_{CI}}(x)\Big).
\end{eqnarray}

For a given $\Gamma_s^{NC}, \Gamma_s^{CI}$ sets response time $\mathcal{R}_{s | {(\Gamma_s^{NC}, \Gamma_s^{CI}})}$ will be the minimal solution of the following iteration\footnote{Note that the worst-case is when the job arrives at $t_0$ (\ie $a_s = t_0$).}~\cite{guan2009new_wcrt_bound}:
\begin{equation}
x = \left\lfloor \frac{\Omega_s(x, \Gamma_s^{NC}, \Gamma_s^{CI})}{M}\right\rfloor  + C_s.
\end{equation}
We can solve this using an iterative fixed-point search with the initial condition $x^{(0)}=C_s$. The search terminates if there exists a solution (\ie $x = x^{(k)} = x^{(k-1)}$ for some iteration $k$) or when $x^{(k)} > T_s^{max}$ for any iteration $k$ since $\tau_s$ becomes trivially unschedulable for WCRT greater than $T_s^{max}$. Finally we can calculate the WCRT of $\tau_s$ as follows: 
\begin{equation}
\mathcal{R}_s = \max\limits_{\left( \Gamma_s^{NC}, \Gamma_s^{CI} \right) \in \mathcal{Z}_{\tau_s} } \mathcal{R}_{s | {(\Gamma_s^{NC}, \Gamma_s^{CI}})} .
\end{equation}

\subsection{Algorithm} \label{sec:alg}

The security task $\tau_s$ remains schedulable with any period $T_s \in  [\mathcal{R}_s, T_s^{max}]$. However as mentioned earlier, the calculation of $\mathcal{R}_s$ requires us to know the period and response time of other high priority tasks $\tau_h \in hp_S(\tau_s)$. Also if we arbitrarily set $T_s = \mathcal{R}_s$ (since this allows us to execute security tasks more frequently) it may negatively affect the schedulability of other tasks that are at a lower priority than $\tau_s$ because of a high degree of interference from $\tau_s$. Hence, we developed an iterative algorithm that gives us a trade-off between schedulability and  monitoring frequency.

\renewcommand{\algorithmicforall}{\textbf{for each}}
    \renewcommand\algorithmiccomment[1]{%
 {\it /* {#1} */} %
}
\renewcommand{\algorithmicrequire}{\textbf{Input:}}
    \renewcommand{\algorithmicensure}{\textbf{Output:}}

		\begin{algorithm}[t]
        
			\begin{algorithmic}[1]
			 \begin{footnotesize}
                \REQUIRE Set of real-time and security tasks $\Gamma = \Gamma_R \cup \Gamma_S$
    \ENSURE Periods of the security tasks, $\mathbf{T}$ (if the security tasks are schedulable); $\mathsf{Unschedulable}$	 otherwise	
					\vspace{0.4em}
                   	
                    \STATE Set $T_s := T_s^{max}$ and calculate $\mathcal{R}_s$ for $\forall \tau_s \in \Gamma_S$ 
					\IF {$\exists \tau_s$ such that $\mathcal{R}_s > T_s^{max}$}

                    \STATE \textbf{return} $\mathsf{Unschedulable}$  
                    
                    \ENDIF

					\FORALL{security task $\tau_s \in \Gamma_S$
                    (from higher to lower priority)}

                    \STATE \COMMENT{Find period for which all lower priority tasks are schedulable}
                    \STATE Find minimum $T_s^* \in [\mathcal{R}_s, T_s^{max}]$ using Algorithm \ref{alg:mc_log_search} such that $\forall \tau_j \in lp(\tau_s)$ remains schedulable (\eg $\mathcal{R}_j \leq T_j^{max}$)
                    
                    
                    
                    \STATE Update $\mathcal{R}_j$ for $\forall \tau_j \in lp(\tau_s)$ considering the interference with new period $T_s^*$

					\ENDFOR	
                    \STATE \textbf{return} $\mathbf{T} := [T_s^*]_{\forall \tau_s \in \Gamma_S}$ ~~ \COMMENT{return the periods}

				\end{footnotesize}
				 
			\end{algorithmic}
			\caption{Period Selection}
 \label{alg:mc_period_selection}
		\end{algorithm}

Our proposed solution (refer to Algorithm \ref{alg:mc_period_selection} for a formal description) works as follows. We first fix the period of each security task $T_s^{max}$ and calculate the response time $\mathcal{R}_s$ using the approach presented in Section \ref{sec:wcrt_calculation} (Line 1). If there exists a task $\tau_j$ such that $\mathcal{R}_j > T_j^{max}$ we report the taskset as unschedulable (Line 2) since it is not possible to find a period for the security tasks within the designer provided bounds -- this unschedulability  result will help the designer in modifying the requirements (and perhaps RT tasks' parameters, if possible) accordingly to integrate security tasks for the target system. If the taskset is schedulable with $T_s^{max}$, we then iteratively optimize the periods from higher to lower priority order (Lines 5-9) and return the period (Line 10). To be specific, for each task $\tau_s \in \Gamma_S$ we perform a logarithmic search \cite[Ch. 6]{knuth1997art} (see Algorithm \ref{alg:mc_log_search} for the pseudocode) and find the minimum period $T_s^*$ within the range $[R_s, T_s^{max}]$ such that all low priority tasks (denoted as $lp(\tau_s)$) remain schedulable, \eg $\forall \tau_j \in lp(\tau_s): \mathcal{R}_j \leq T_j^{max}$ (Line 7). Note that since we perform these steps from higher to lower priority order, WCRT and period of all higher priority tasks (\eg $\forall \tau_h \in hp(\tau_s)$) are already known.  We then update the response times of all low priority task $\tau_j \in lp(\tau_s)$ considering the interference from the newly calculated period $T_s^*$ (Line 8) and repeat the search for next security task.

\renewcommand{\algorithmicforall}{\textbf{for each}}
    \renewcommand\algorithmiccomment[1]{%
 {\it /* {#1} */} %
}
\renewcommand{\algorithmicrequire}{\textbf{Input:}}
    \renewcommand{\algorithmicensure}{\textbf{Output:}}

		\begin{algorithm}[t]
        
			\begin{algorithmic}[1]
			 \begin{footnotesize}
                \REQUIRE Set of real-time and security tasks $\Gamma = \Gamma_R \cup \Gamma_S$ 
    \ENSURE A feasible period $T_s^*$ for the security task under analysis (\ie $\tau_s$)
					\vspace{0.4em}
					
					\STATE Define $T_s^l := \mathcal{R}_s, ~~ T_s^r := T_s^{max}, ~~ T_s^c := 0$
					\STATE Set $\widehat{\mathcal{T}}_s := \{T_s^{max}\}$  ~~ \COMMENT{Initialize a variable to store the set of feasible periods}
					\WHILE{$T_s^l <= T_s^r$}
					\STATE Update $T_s^c := \lfloor \frac{T_s^l + T_s^r}{2} \rfloor$
					
					\IF{$\exists \tau_j \in lp(\tau_s)$ such that $\tau_j$ is \textit{not schedulable} with $T_s = T_s^c$}
					\STATE \COMMENT{Increase the period of $\tau_s$ to make the taskset schedulable (\eg by reducing the interference)}
					\STATE Update $T_s^l := T_s^c + 1$
					\ELSE
					\STATE \COMMENT{Taskset is schedulable with $T_s^c$}
					\STATE $\widehat{\mathcal{T}}_s := \widehat{\mathcal{T}}_s \cup \{ T_s^c \}$  ~~ \COMMENT{Add $T_s^c$ to the feasible period list}
					\STATE \COMMENT{Check schedulability with smaller period for next iteration}
					\STATE Update $T_s^r := T_s^c - 1$ 
					\ENDIF

					\ENDWHILE
                   	
                    \STATE Set $T_s^* := \min \left(\widehat{\mathcal{T}}_s \right)$ \COMMENT{Find the minimum period from the set of feasible periods}
                    
                    \STATE \textbf{return} $T_s^*$  ~~ \COMMENT{return the period of $\tau_s$}

				\end{footnotesize}
				 
			\end{algorithmic}
			\caption{Calculation of Minimum Feasible Period for the Security Task $\tau_s$}
 \label{alg:mc_log_search}
		\end{algorithm}

\section{Evaluation} \label{sec:evaluation}


We evaluate \pnamemcex on two fronts: \ci a proof-of-concept implementation on an ARM-based rover platform with security applications -- to demonstrate the viability of our scheme in a realistic setup (Section \ref{sec:exp_rover}); and \cii with synthetically generated workloads for broader design-space exploration (Section \ref{sec:exp_synthetic}). Our implementation code will be made available in a public, open-sourced repository~\cite{mhasan_conex_implementation}.

\subsection{Experiment with an Embedded Platform and Security Applications} \label{sec:exp_rover}

\subsubsection{Platform Overview}




We implemented our ideas on a rover platform
manufactured by Waveshare~\cite{waveshare}.
The rover hardware/peripherals (\eg wheel, motor, servo, sensor, \etc) are controlled by a Raspberry Pi 3 (RPi3) Model B~\cite{rpi3} SBC (single board computer). The RPi3 is equipped with a 1.2 GHz 64-bit quad-core ARM Cortex-A53 CPU on top of  Broadcom BCM2837 SoC (system-on-chip). In our experiments we focus on a dual-core setup (\eg activated only \texttt{core0} and \texttt{core1}) and disabled the other two cores) -- this was done by modifying the boot command file \texttt{/boot/cmdline.txt} and set the flag \texttt{maxcpus}=2. The base hardware unit of the rover is connected with RPi3 using a 40-pin GPIO (general-purpose input/output) header. The rover supports omni-directional movement and can move avoiding obstacles using an infrared sensor (\eg ST188~\cite{ST188}). We also attached a camera (RPi3 camera module) that can capture static images (3280 $\times$ 2464 pixel resolution).  The detailed specifications of the rover hardware (\eg base chassis, adapter, \etc) are available on the vendor website~\cite{waveshare}.


\subsubsection{Experiment Setup and Implementation}


We implemented our security integration scheme in Linux kernel 4.9 and enabled real-time capabilities by applying the PREEMPT\_RT patch~\cite{rt_patch} (version 4.9.80-rt62-v7+). 
In our experiments the rover moved around autonomously and periodically captured images (and stored them in the internal storage).  We assumed implicit deadlines for RT tasks and considered two RT tasks: \ca  a navigation task -- that avoids obstacles (by reading measurements from infrared sensor) and navigates the rover and \cb a camera task that captures and stores still images. We do not make any modifications to the vendor provided control codes (\eg navigation task). In our experiments we used the following parameters $(C_{r}, T_{r})$: $(240, 500)$ ms and $(1120, 5000)$ ms, for navigation and camera tasks, respectively (\ie total RT task utilization was $0.7040)$. We calculated the WCET values using ARM cycle counter registers (CCNT) and set periods in a way that the rover can navigate and capture images without overloading the RPi3 CPU. Since CCNT is not accessible by default, we developed a Linux loadable kernel module and activated the registers so that our measurement scripts can access counter values.

To integrate security into this rover platform, we included two additional security tasks: \ca an open-source security application, Tripwire~\cite{tripwire}, that checks intrusions in the image data-store and  \cb our custom security task that checks current kernel modules (as a preventive measure to detect rootkits) and compares with an expected profile of modules. The WCET of the security tasks were $5342$ ms and $223$ ms, respectively and the maximum periods of security tasks were assumed to be $10000$ ms (\eg total system utilization is at least $0.7040 + 0.5565 = 1.2605$) -- we picked this maximum period value by trial and error so that the taskset became schedulable for demonstration purposes. We used the Linux \texttt{taskset} utility~\cite{linux_taskset} for partitioning tasks to the cores and the tasks were scheduled using Linux native \texttt{sched\_setscheduler()} function. For accuracy of our measurements we disabled all CPU frequency scaling features in the kernel and executed RPi with a constant frequency (\eg 700 MHz -- the default value). The system configurations and tools used in our experiments are summarized in Table \ref{tab:rov_param}.

\begin{table}
\caption{Summary of the Evaluation Platform}
\label{tab:rov_param}
\centering
\begin{tabular}{P{2.7cm}||P{4.9cm}}
\hline 
\bfseries Artifact & \bfseries Configuration/Tools\\
\hline\hline
Platform         & 1.2 GHz 64-bit Broadcom BCM2837  \\
CPU         & ARM Cortex-A53  \\
Memory & 1 Gigabyte  \\ 
Operating System & Debian Linux (Raspbian Stretch Lite)							\\ 
Kernel version   & Linux Kernel 4.9 				\\ 
Real-time patch    & PREEMPT\_RT 4.9.80-rt62-v7+		\\ 
Kernel flags & $\mathtt{CONFIG\_PREEMPT\_RT\_FULL}$ enabled\\		
Boot parameters & $\mathtt{maxcpus}$=2, $\mathtt{force\_turbo}$=1,
$\mathtt{arm\_freq}$=700,
$\mathtt{arm\_freq\_min}$=700 \\
WCET measurement & ARM cycle counter registers \\
Task partition & Linux \texttt{taskset} \\
\hline
\end{tabular}
\end{table}



We compared the performance of our scheme with prior work, \prefname~\cite{mhasan_date18}. In that work, researchers proposed to statically partition the security tasks among the multiple cores -- to our knowledge \prefname is the state-of-the-art mechanism for integrating security in legacy multicore-based RT platforms. The key idea in \prefname was to allocate security tasks using a greedy best-fit strategy: for each task, allocate it to a  core that gives maximum monitoring frequency (\ie shorter period) without violating schedulability constraints of already allocated tasks. 


\subsubsection{Experience and Evaluation}

\begin{figure}
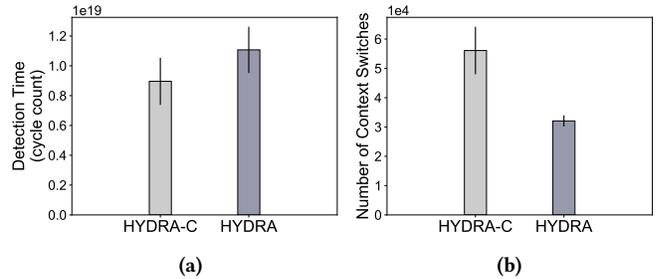

	\centering
	\begin{subfigure}[b]{0.5\linewidth} 
	    \hspace*{-1.5em}
		\centering \includegraphics[scale=0.28]{rover_id_time}
		\caption{\label{fig:rov_id_time}}
	\end{subfigure}%
	\begin{subfigure}[b]{0.5\linewidth}
	    \hspace*{-0.6em}
		\centering\includegraphics[scale=0.28]{rover_cs_count}
		\caption{\label{fig:rov_cs_count}}
	\end{subfigure}%
	\caption{Experiments with rover platform:~\textit{(a)}~time (cycle counts) to detect intrusions;~\textit{(b)}~average number of context switches. On average our scheme can detect the intrusions faster without impacting the performance of RT tasks. }
\end{figure}



We observed the performance of \pnamemcex by \textit{analyzing how quickly an intrusion can be detected}. We considered the following two realistic attacks\footnote{\underline{Note}: our focus here is on the integration of any given security mechanisms rather the detection of any particular class of intrusions. Hence we assumed that there were no zero-day attacks and the security tasks were able the detect the corresponding attacks correctly (\ie there were no false-positive/negative errors) -- although the generic framework proposed in this paper allows the designers to accommodate any desired security (\eg intrusion detection/prevention) technique.}: \ci an ARM shellcode~\cite{arm_shellcode} that allows the attacker to modify the contents of the image data-store -- this attack can be detected by Tripwire; \cii a rootkit~\cite{simple_rootkit} that intercepts all the \texttt{read()} system calls -- our custom security task can detect the presence of the malicious kernel module that is used to inject the rootkit. For each of our experimental trials we launched attacks at random points during program execution (\ie from the RT tasks) and used ARM cycle counters to measure the detection time.  In Fig.~\ref{fig:rov_id_time} we show the average time to detect both the intrusions (in terms of cycle counts, collected from $35$ trials) for \pnamemcex and \prefname schemes.  From our experiments we found that, on average, our scheme can detect intrusions  $19.05\%$ faster compared to the \prefname approach (Fig.~\ref{fig:rov_id_time}). Since our scheme allows security tasks to migrate across cores, it provides smaller response time (\eg shorter period) in general and that leads to faster detection times.


We next measured the overhead of our security integration approach in terms of number of context switches (CS). For each of the trials we observed the schedule of the RT and security tasks for $45$ seconds and counted the number of CS using the Linux \texttt{perf} tool~\cite{linux_perf}. In Fig.~\ref{fig:rov_cs_count} we show the number of CS (y-axis in the figure) for \pnamemcex and \prefname schemes (for $35$ trials). As shown in the figure, our approach increases the number of CS (since we permit migration across cores) compared to the other scheme that statically partitions security tasks. From our experiments we found that, on average, our scheme increases CS by 
$1.75$ times.
However, this increased CS overhead \textit{does not impact the deadlines of RT tasks} (since the security tasks always execute with a priority lower than the RT tasks) and thus may be acceptable for many RT applications.

\subsection{Experiment with Synthetic Tasksets} \label{sec:exp_synthetic}

We also conducted experiments with (randomly generated) synthetic workloads for broader design-space exploration. 



\begin{table}
\caption{Simulation Parameters}
\label{tab:mc_ex_param}
\centering
\begin{tabular}{P{5.5cm}||P{2.2cm}}
\hline 
\bfseries Parameter & \bfseries Values\\
\hline\hline
Process cores, $M$ & $\lbrace 2, 4\rbrace$ \\
Number of real-time tasks, $N_R$ & $[3\times M, 10 \times M]$ \\
Number of security tasks, $N_S$ & $[2 \times M, 5 \times M]$ \\
Period distribution (RT and security tasks) & Log-uniform \\
RT task allocation & Best-fit \\
RT task period, $T_r$ & $[10, 1000]$ ms \\
Maximum period for security tasks, $T_s^{max}$ & $[1500, 3000]$ ms\\
Minimum utilization of security tasks & At least $30\%$ of RT tasks \\
Base utilization groups & $10$ \\
Number of taskset in each configuration & $250$\\
\hline
\end{tabular}
\end{table}

\subsubsection{Taskset Generation and Parameters} \label{sec:tc_generation}

In our experiments we used parameters similar to those in related work~\cite{mhasan_date18,sibin_RT_security_journal,mhasan_rtss16,sun2014improving_wcrt2,davis2015global,mn_gp} (see Table \ref{tab:mc_ex_param}). We considered $M \in \lbrace2,4\rbrace$ cores and each taskset instance contained $[3 \times M, 10 \times M]$ RT and $[2 \times M, 5 \times M]$ security tasks. To generate tasksets with an even distribution
of tasks, we
grouped the real-time and security tasksets by base-utilization
from $[(0.01+0.1 i)M, (0.1+0.1i)M]$ where $i \in \mathbb{Z}, 0 \leq i \leq 9$. Each utilization group contained $250$ tasksets (\eg total $10 \times 250 = 2500$ tasksets were tested for each core configuration). We only considered the schedulable tasksets (\eg the condition in Section \ref{sec:rt_model} was satisfied for all RT tasks) -- since tasksets that fail to meet this condition are trivially unschedulable. Task periods were generated according to a log-uniform distribution.  Each RT task had periods between $[10, 1000]$ ms and the maximum periods for security tasks were selected from $[1500 ,3000]$ ms. We assumed that RT tasks were partitioned using a best-fit~\cite{parti_see} strategy and the total utilization of the security tasks was at least $30\%$ of the system utilization. For a given number of tasks and total system utilization, the utilization of individual tasks were generated using Randfixedsum algorithm \cite{randfixedsum}. 




\begin{figure}[!t]
\centering
\includegraphics[scale=0.45]{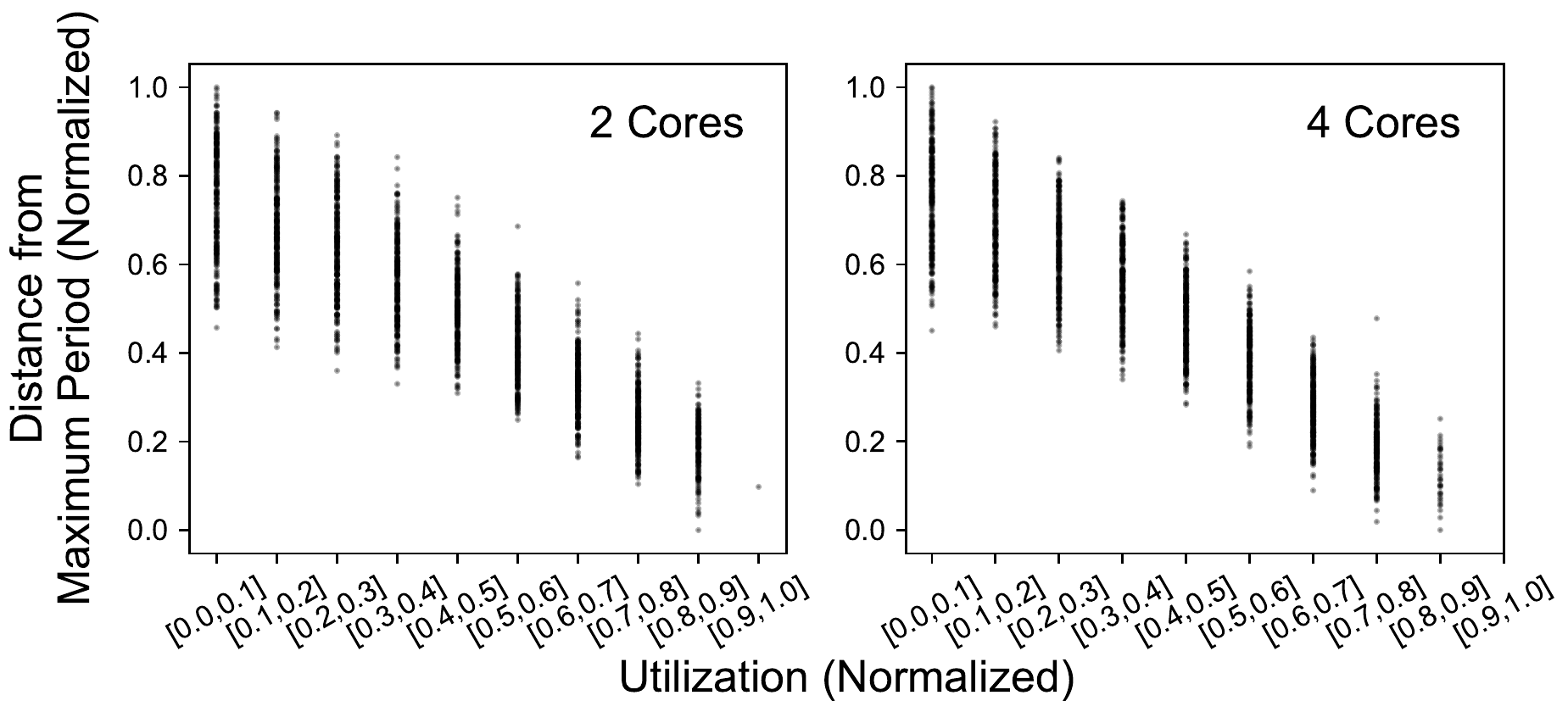}
\caption{Euclidean distance between achievable period and maximum period vectors for different utilizations. 
Larger distance (y-axis in the figure) implies security tasks execute more frequently.}
\label{fig:ecdist_count}
\end{figure}

\begin{figure*}
	\centering
	\captionsetup[subfigure]{oneside,margin={3.2cm,0cm}}
	\begin{subfigure}[]{0.5\linewidth} 
		\centering \includegraphics[scale=0.45]{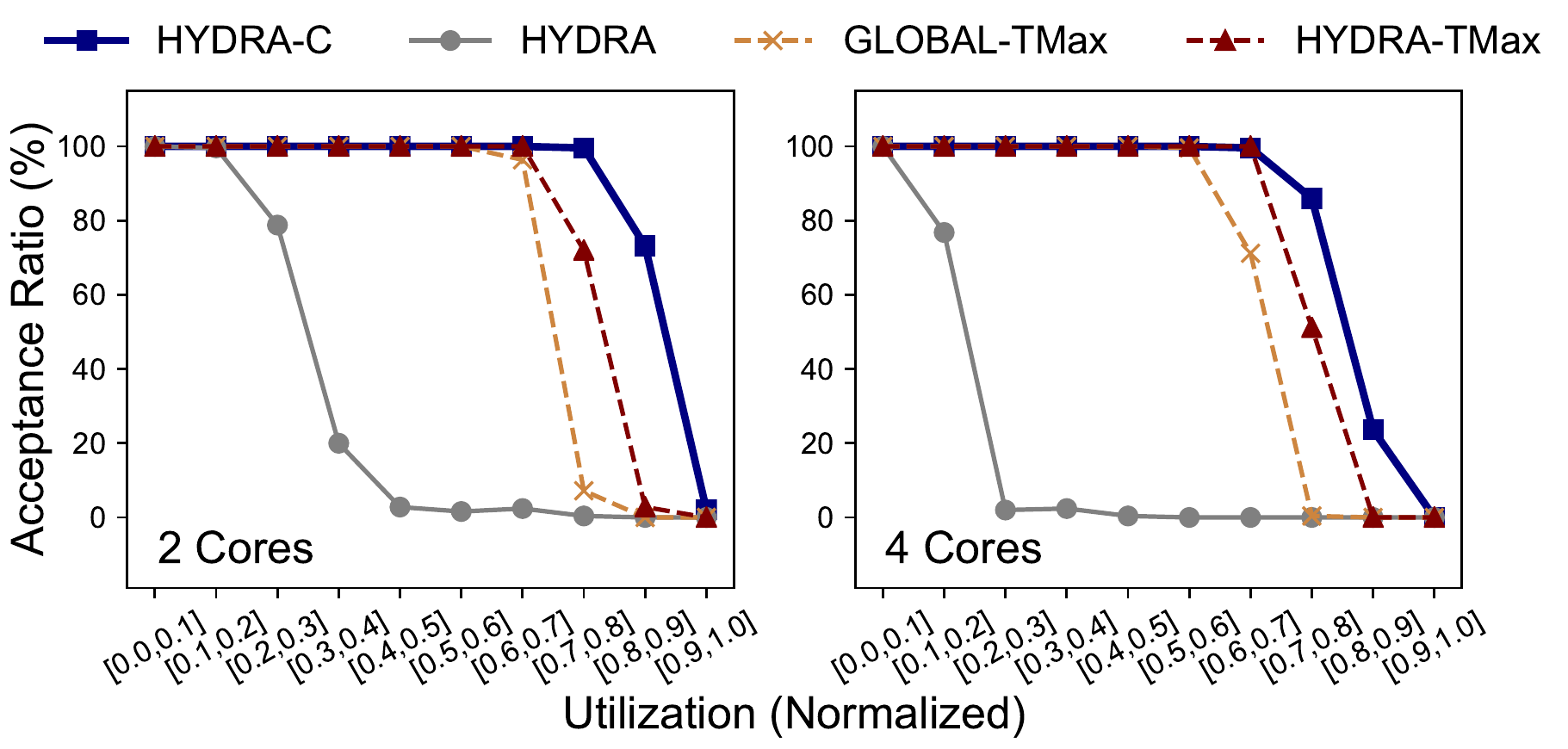}
\caption{} 
\label{fig:sched_count}
	\end{subfigure}%
	\begin{subfigure}[]{0.5\linewidth}
		\centering \includegraphics[scale=0.45]{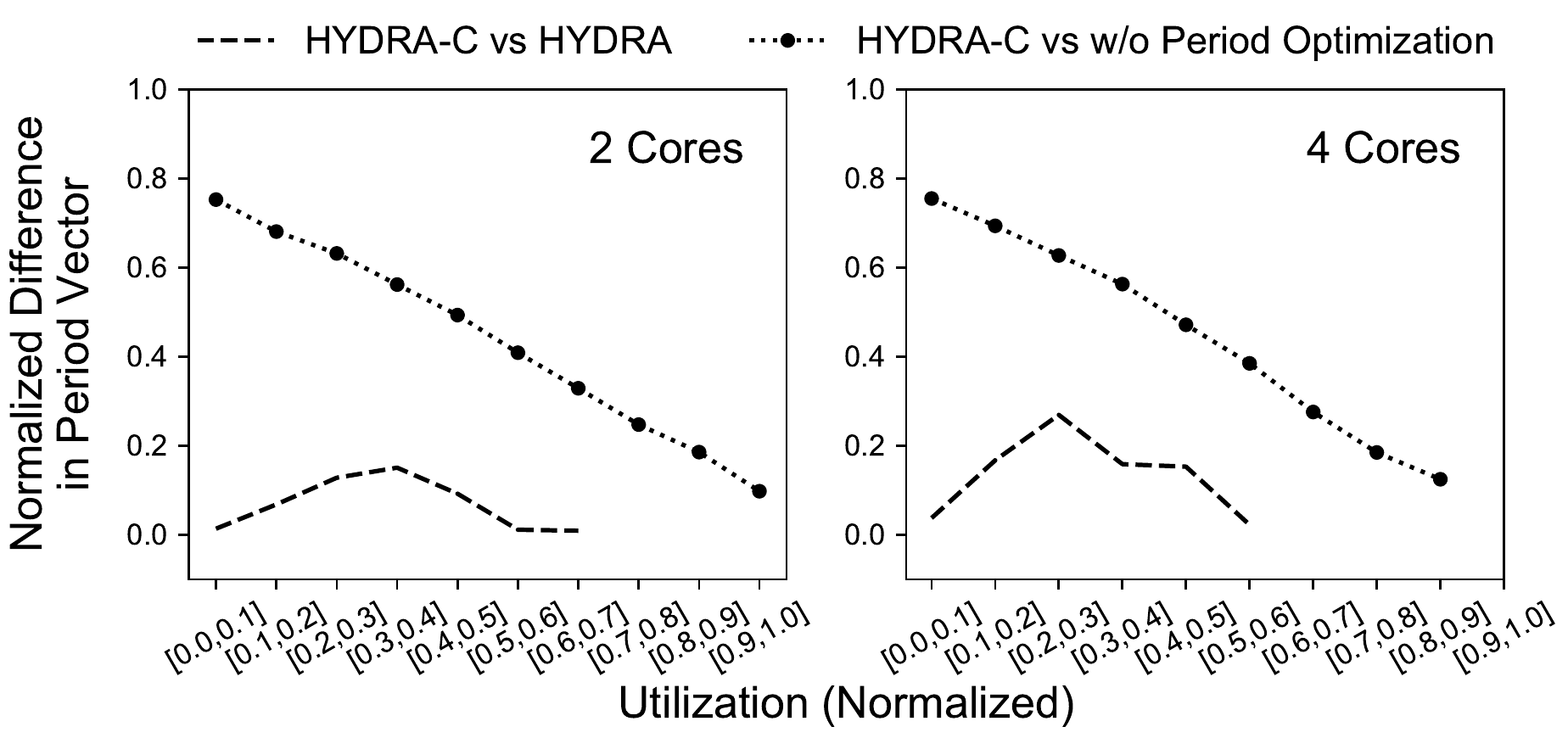}
\caption{}
\label{fig:dist_diff_count}
	\end{subfigure}%
	\caption{Impact on schedulability and security. \textit{(a)} The acceptance ratio vs taskset utilizations for 2 and 4 core platforms: our scheme outperforms \prefname and \prefnameglobal approaches for higher utilizations. \textit{(b)} Difference in period vectors for our approach and reference schemes (\eg \prefname, \prefnameglobal, \prefnamepartition): the non-negative distance (y-axis in the figure) implies that \pnamemcex finds shorter periods than other schemes.} 
	\label{fig:exp_sec_sched_tradeoff}
\end{figure*}

\subsubsection{Impact on Inter-Monitoring Interval}

We first observe how frequently we can execute (schedule) security tasks compared to the designer specified bound (Fig.~\ref{fig:ecdist_count}). The x-axis of Fig.~\ref{fig:ecdist_count} shows the normalized utilization $\tfrac{U}{M}$  where $U$ is the minimum utilization requirement and given as follows: 
$
U = \hspace*{-0.0em}\sum\limits_{\tau_r \in \Gamma_R} \hspace*{-0.0em} \frac{C_r}{T_r}  + \hspace*{-0.0em}\sum\limits_{\tau_s \in \Gamma_S} \hspace*{-0.0em} \frac{C_s}{T_s^{max}}.
$  
The y-axis represents the Euclidean distance  between the calculated period vector $\mathbf{T}^{\boldsymbol{*}} = [T_s^*]_{\forall \tau_s \in \Gamma_S}$ and maximum period vector $\mathbf{T}^{\boldsymbol{\rm max}} = [T_s^{max}]_{\forall \tau_s \in \Gamma_S}$ (normalized to $1$). A higher distance implies that tasks can run more frequently. As we can see from the figure for higher utilizations, the distance reduces (\eg periods are closer to the maximum value) -- this is mainly due to the interference from higher priority (RT and security) tasks. The results from this figure suggest that we can execute security tasks more frequently for low to medium utilization cases. 



\subsubsection{Impact on Schedulability and Security Trade-off}

While in this work we consider a legacy RT system (\ie where RT tasks are partitioned to respective cores), for comparison purposes 
we considered the following two schemes (in addition  to the related work, \prefname, introduced in Section \ref{sec:exp_rover}) that do not consider any period adaptation for security tasks. 
\begin{itemize}
    \item \prefnameglobal: In this scheme all the RT and security tasks are scheduled using a global fixed-priority multicore scheduling scheme~\cite{mutiprocessor_survey}.
    Since our focus here is on schedulability we set $T_s = T_s^{max}, ~\forall \tau_s \in \Gamma_S$ (recall that a taskset can be considered schedulable if the following conditions hold: $\mathcal{R}_r \leq D_r, \forall \tau_r \in \Gamma_R$ and $\mathcal{R}_s \leq T_s^{max}, \forall \tau_s \in \Gamma_S$). This scheme allows us to observe the performance impacts of binding RT tasks to the cores (due to legacy compatibility).
    
    \item \prefnamepartition: This is similar to the \prefname approach introduced in Section \ref{sec:exp_rover} (\ie security tasks were partitioned using
    best-fit allocation as before) but instead of minimizing periods here we set  $T_s = T_s^{max}, \forall \tau_s$. This allows us to observe the trade-offs between schedulability and security in a fully-partitioned system.

    
\end{itemize}


In Fig.~\ref{fig:sched_count} we compare the performance of \pnamemcex with the \prefname, \prefnameglobal and \prefnamepartition strategies in terms of \textit{acceptance ratio} (y-axis in the figure) defined as the number of schedulable tasksets (\eg $\mathcal{R}_s \leq T_s^{max}, \forall \tau_s$) over the generated one and the x-axis 
shows the normalized utilization $\tfrac{U}{M}$.
As we can see from the figure, \pnamemcex outperforms \prefname when the utilization increases (\ie $\frac{U}{M} > 0.2$). This is because our scheme allows security tasks to execute in parallel across cores and also allocate periods considering the schedulability constrains of all low priority tasks -- this results in a smaller response time and can find more tasksets that satisfy the designer specified bound. In contrast \prefname uses a greedy approach that minimizes the periods of higher priority tasks first without considering the global state. Also \prefname statically binds the security task to the core and hence suffers interference from the higher priority tasks assigned to that core -- this leads to lower acceptance ratios. 
For higher utilizations (\ie $\tfrac{U}{M} \geq 0.7$) \pnamemcex can find tasksets schedulable that can not be easily partitioned by using the \prefnamepartition scheme. The acceptance ratio of our method and the \prefnamepartition scheme is equal when $\tfrac{U}{M} < 0.7$. This is because, for lower utilizations some lower priority security tasks experience less interference due to longer periods and specific core assignment (recall we set $T_s = T_s^{max}$ for all security tasks). While we bind the RT tasks to cores (due to legacy compatibility), it does not affect the schedulability (\ie the acceptance ratio of \pnamemcex is higher when compared to the \prefnameglobal scheme). This is because, RT tasks are already schedulable when partitioned (\eg by assumption on taskset generation, see Section~\ref{sec:tc_generation}) and our analysis reduces the interference that RT tasks have on security ones.
For higher utilizations, the acceptance ratio drops for all the schemes since it is  not possible to satisfy all the constraints due to the high interference from RT and security tasks. We also highlight that while our approach results in better schedulability, \pnamemcex/\prefnamepartition (\ie where legacy RT tasks are partitioned to the cores) and \prefnameglobal  (\ie where all tasks can migrate) schemes are incomparable in general  (\eg there exists taskset that may be schedulable by task partitioning but not in global scheme where migration is allowed and vice-versa) 
-- we allow security tasks to migrate due to security requirements (\eg to achieve faster intrusion detection -- as we explain in the next experiments, see Fig.~\ref{fig:dist_diff_count}).

In the final set of experiments (Fig.~\ref{fig:dist_diff_count}) we compare the achievable periods (in terms of Euclidean distance) for our approach and the other schemes. The x-axis in the Fig.~\ref{fig:dist_diff_count} shows the normalized utilizations and the y-axis represents the average difference between the following period vectors: \ca between \pnamemcex  and \prefname (dashed line); \cb \pnamemcex  and other strategies (\eg \prefnameglobal and \prefnamepartition)  that do not consider period minimization 
(dotted marker) for dual and quad core setup. Higher distance values imply that the periods calculated by \pnamemcex are smaller (\ie leads to faster detection time) and our approach outperforms the other scheme. For low to medium utilizations (\eg $0.2 \leq U \leq 0.5$) \pnamemcex performs better when compared to \prefname. In situations with higher utilizations, the lesser availability of slack time results in \pnamemcex  and \prefname performing in a similar manner. Also, for higher utilizations \prefname is unable to find schedulable tasksets and hence there exist fewer data points. 

Our experiments also show that compared to \prefnameglobal and \prefnamepartition our approach finds smaller periods in most cases (Fig.~\ref{fig:dist_diff_count}).  This is expected since there is no period adaptation (\ie we set $T_s = T_s^{max}$ for those schemes). However it is important to note that \pnamemcex achieves better execution frequency (\ie smaller periods) without sacrificing schedulability as seen in Fig.~\ref{fig:sched_count}. That is, our semi-partitioned approach achieves better continuous monitoring when compared with both a fully-partitioned approach (\prefname, \prefnamepartition) and a global scheduling approach (\prefnameglobal) while providing the same or better schedulability.



\section{Discussion}



In this paper we do not design for any specific security tasks (the IDS system used is meant for demonstration purposes only) and allow designers to integrate their preferred techniques. Depending on the actual implementation of the security tasks some attack may not be detectable. For instance, the system may be vulnerable to zero-day attacks if the security tasks are not designed to detect unforeseen exploits or anomalous behaviors. There exists cases where security tasks may require some amount of system modifications and/or porting efforts -- say a timing behavior based security checking~\cite{hamad2018prediction,securecore,dragonbeam} may require the insertion of probing mechanisms inside the RT application tasks (or additional hardware) so that security tasks can validate their expected execution profiles.

\pnamemcex abstracts security tasks (and underlying monitoring events) and works in a proactive manner. However, designers may want to integrate security tasks that \textit{react}, based on anomalous behavior. For instance, let at time $t$, $j$-th job of task $\tau_s$ (\eg $\tau_s^j$) performs action $\mathtt{a}_0$ (\eg runtime of real-time tasks). Because of intrusions (or perhaps due to other system artifacts) in time $[t, t+T_s]$ ($T_s$ is the period of $\tau_s$), job $\tau_s^{j+1}$ finds that $\mathtt{a}_0$ is not behaving as expected. Therefore $\tau_s^{j+1}$ may perform both actions, $\mathtt{a}_0$ and $\mathtt{a}_1$ (say that checks the list of system calls, to see if any undesired calls are executed). One way to support such a feature is to consider the dependency (\ie $\mathtt{a}_1$ depends on $\mathtt{a}_0$ in this case) between security checks (\eg sub-tasks). We intend to extend our framework considering dependency between security tasks.
\section{Related Work} \label{sec:rel_work}


\subsubsection*{RT Scheduling and Period Optimization}
Although not in the context of RT security, the scheduling approaches present in this paper  can be considered as a special case of prior work~\cite{linux_push_pull} where each task can bind to any arbitrary number of available cores. For a given period, this prior analysis~\cite{linux_push_pull} is pessimistic for the model considered by \pnamemcex (\ie RT tasks are partitioned and security tasks can migrate on any core) in a sense that it over-approximates carry-in interference from the tasks bound to single cores (\eg RT tasks) and hence results in lower schedulability (\eg identical to the \prefnameglobal scheme in Fig.~\ref{fig:sched_count}). 
Researchers also propose various semi-partitioned scheduling strategies for fixed-priority RTS~\cite{kato2009semi,lakshmanan2009partitioned}. However, this work  primarily focuses on improving schedulability (\eg by allowing highest priority task to migrate) and they are not designed for security requirements in consideration (\eg minimizing periods and executing security tasks with fewer interruption for faster anomaly detection). There exists other work~\cite{delay_period} in which
the authors statically assign the  periods for multiple independent control tasks considering control delay as a cost metric. Davare \etal~\cite{davare2007period_can} propose to assign task and message periods as well as satisfy end-to-end latency
constraints
for distributed automotive systems. While the previous work focus on optimizing period of \textit{all} the tasks in the system for a single core setup, our goal is to ensure security without violating timing constraints of the RT tasks in a multicore platform.


\subsubsection*{Security Solutions for RTS}


In recent years researchers proposed various mechanisms to provide security guarantees into legacy and non-legacy RTS (both single and multicore platforms) in several directions, \viz integration of security mechanisms~\cite{mhasan_rtss16,mhasan_ecrts17,mhasan_date18}, authenticating/encrypting communication channels~\cite{lesi2017network,lesi2017security,xie2007improving,xie2005dynamic,lin2009static,jiang2013optimization}, side-channel defence techniques~\cite{sg1,sg2,sibin_RT_security_journal,taskshuffler,volp_TT_randomization} as well as hardware/software-based frameworks~\cite{mohan_s3a,securecore,securecore_memory,slack_cornell,securecore_syscal,mhasan_resecure16,mhasan_resecure_iccps}. 


Perhaps the closest line of research is \prefname~\cite{mhasan_date18} 
where authors proposed to statically partition security tasks to the cores and used an optimization-based solution to obtain the periods. While this approach does not have the overhead of context switches across cores, as we observed from our experiments (Section \ref{sec:evaluation}), that scheme results in a poor acceptance ratio for larger utilizations, and suffers interference from other high priority tasks leading to slower detection of intrusions (\ie less effective). The problem of integrating security for single core RTS is addressed in prior research~\cite{mhasan_rtss16} where authors used hierarchical scheduling~\cite{server_ab_uk} 
and proposed to execute security tasks with a low priority server. This approach is also extended to a multi-mode framework~\cite{mhasan_ecrts17} that allows security tasks to execute in different modes (\ie passive monitoring with lowest priority as well as exhaustive checking with higher priority). These server-based approaches, however, may require additional porting efforts for legacy systems. 


There exists recent work~\cite{lesi2017network,lesi2017security} to secure cyber-physical systems from man-in-the-middle attacks by enabling authentication mechanisms and timing-aware network resource scheduling. There has also been work~\cite{xie2005dynamic,xie2007improving,lin2009static} where authors proposed to add protective security mechanisms into RTS and considered periodic task scheduling where each task requires a security service whose overhead varies according to the required level of service. 
The problem of designing secure multi-mode RTS have also been addressed in prior work~\cite{jiang2013optimization} under dynamic-priority scheduling. In contrast, we consider a multicore fixed-priority scheduling mechanism where security tasks are executed periodically, across cores, while meeting real-time requirements. The above mentioned work are designed for single core platforms and it is not straightforward to retrofit those approaches for multicore legacy systems. 



In another direction, the issues related to information leakage through storage timing channels using shared architectural resources (\eg caches) is introduced in prior work~
\cite{sg1, sg2, sibin_RT_security_journal}. The key idea is to use a modified fixed-priority scheduling algorithm with a state cleanup mechanism to mitigate information leakage through shared resources. However, this leakage prevention comes at a cost of reduced schedulability. 
Researchers also proposed to limit inferability of deterministic RT schedulers by randomizing the task execution patterns. Yoon \etal~\cite{taskshuffler}  proposed a schedule obfuscation method for fixed-priority RM systems. A combined online/offline randomization scheme~\cite{volp_TT_randomization} is also proposed to reduce determinism for time-triggered (TT) systems where tasks are executed based on a pre-computed, offline, slot-based schedule. We highlight that all the aforementioned work either requires modification to the scheduler or RT task parameters, and is designed for single core systems only.


Unlike our approach that works at the scheduler-level, researchers also proposed hardware/software-based architectural solutions~\cite{mohan_s3a,onchip_fardin,securecore,securecore_memory,slack_cornell,securecore_syscal,mhasan_resecure16,mhasan_resecure_iccps} to improve the security posture of future RTS. Those solutions require system-level modifications and are not suitable for legacy systems. To our knowledge this is the first work that aims to achieve continuous monitoring for multicore-based legacy RT platforms.
\section{Conclusion}
\label{sec:conclusion}


Threats to safety-critical RTS are growing and there is a need for developing layered defense mechanisms to secure such critical systems. We present algorithms to integrate continuous security monitoring for legacy multicore-based RTS. By using our framework, systems engineers can improve the security posture of RTS. This additional security guarantee also enhances safety -- which is the main goal for such systems.

\bibliographystyle{IEEEtran}

\bibliography{references_short}

\end{document}